\numberwithin{equation}{section}
\numberwithin{figure}{section}
\theoremstyle{plain}
\newtheorem{thm}{\protect\theoremname}
\theoremstyle{plain}
\newtheorem{prop}[thm]{\protect\propositionname}
\providecommand{\propositionname}{Proposition}
\providecommand{\theoremname}{Theorem}
\begin{document}
\title{A non-equilibrium theoretical framework for statistical physics with
application to turbulent systems and their predictability}
\author{Richard Kleeman}
\date{June 2019}
\address{Courant Institute of the Mathematical Sciences. New York University.
251 Mercer Street New York.}
\email{kleeman@cims.nyu.edu}
\begin{abstract}
A new theoretical approach to non-equilibrium statistical systems
has recently been proposed by the author, a co-author and others.
It is based on a variational principle which is associated with the
discrepancy of a path through thermodynamical space to one following
Liouvillean evolution. In this contribution the approach is extended
in such a way that it can be applied to a wide range of practical
non-equilibrium statistical systems such as those arising in turbulence
but also to a general class of statistical physics models. The new
methodology allows for application to autonomous dynamical systems
generalizing the previous work which applied only to Hamiltonian systems.
Furthermore it provides a general analysis of near equilibrium conditions
which allows for a natural analysis of predictability limits in turbulent
systems. Finally it describes a method is described for the numerical
calculation of far from equilibrium thermodynamical trajectories.
\end{abstract}

\maketitle

\lhead{}

\chead{A non-equilibrium statistical physics framework}

\rhead{}

\section{Introduction}

A problem of immense practical utility concerns the evolution of slow
variables within a system with many degrees of freedom. In general
this is influenced by the remaining fast variables of the system as
well as other slow variables. Since the subset of slow variables is
often useful and small while the fast variable subset is large and
usually irrelevant, one typically requires a statistical dynamical
system dependent only on the first small number of variables. If the
fast variables are ignored then the slow variables may conveniently
be regarded as continuous random variables.

An early approach to this problem in fluid dynamical systems was that
of moment closure wherein the low order moment evolution of the slow
variables was derived using a physically based closure hypothesis.
Such a truncating closure is required because typically all moments
influence all others. A typical example of this is provided by the
extensive work of Fredriksen and co-authors (see, for example, \cite{frederiksen2004regularized}).

Another widely used approach in the fluid context idealizes the fast
variables as stochastic forcing of the slow random variables. This
can be done in a mathematically rigorous fashion and can lead to rather
complex and comprehensive stochastic differential equations. An excellent
exposition on this approach can be found in the work of Majda and
co-authors (see, for example, \cite{mtv2}).

The key initial assumption of any non-equilibrium statistical theory
concerns the identification of the relevant slow variables for the
problem. This choice will obviously be influenced by the time scale
considered important. This issue can be seen acutely in the area of
irreversible thermodynamics. The classical version of this theory
due to Onsager and many others (see, for example, \cite{de1984irreversible})
assumes that \textbf{local} energy and particle number are the slow
variables and generalizes Gibbsian equilibrium thermodynamics to consider
where only a local equilibrium holds. In recent decades this has been
found to be an inadequate description when relatively fast phenomenon
require consideration. Then one needs to also consider as slow variables,
for example, the local heat and density \textbf{fluxes} which on a
slower time scale adjust to their classical Onsagerian values as determined
by the previously mentioned slow variables. This field of study is
referred to as extended irreversible thermodynamics (EIT) and an excellent
survey can be found in \cite{jou2010extended}\footnote{It is worth observing that older perturbative approaches to kinetic
theory such as those of Chapman, Enskog and Grad (see \cite{liboff2003kinetic})
also imply that heat and momentum fluxes need independent consideration
under certain conditions. This is a complex area and is reviewed comprehensively
in the cited reference in the main text.}. A different perspective on this can be found in the GENERIC approach
documented in \cite{oett} which we remark on later. Another further
interesting recent approach is ``stochastic thermodynamics'' (see,
for example, \cite{seifert2012stochastic}) where the thermodynamical
variables are assumed to have an associated density evolving according
to a Fokker Planck equation. There is an interesting connection between
this latter approach and that to be promoted here which we discuss
more at the end of section 2.

Traditionally when one moves to consider the underlying densities
for slow variables\footnote{Often for the purpose of developing a kinetic theory}
it is common to invoke a maximum entropy principle using the expected
values of the slow variables as constraints. This is usually applied
over a ``historical'' time period leading up to the present time
of interest (see, for example, \cite{zubarev1996statistical}, \cite{zubarev1997statistical}
and \cite{luzzi2013predictive}). This then leads, once some further
simplifying assumptions are applied, to density evolution equations
which are variants of the well known Mori-Zwanzig equation (see \cite{zwanzig2001nonequilibrium}).
Due to the finite historical time interval used and the simplifying
assumptions, these equations typically have memory effects which is
unlike the fluid stochastic modeling case mentioned earlier. Comparison
of these maximum entropy densities with those derived from a numerical
simulation reveals that they are often close to\footnote{Assuming that the correct identification of slow variables has been
made.} but not exactly of maximum entropy form unlike the case of the equilibrium
Gibbs densities where classical Jaynes maxent holds very accurately.

Now if one assumes, following Zubarev, that the maximum entropy densities
of certain slow variables are good approximations for the exact non-equilibrium
densities then the former ``coarse grained'' densities will no longer
satisfy the Liouville equation which the latter must. The discrepancy
from Liouvillean evolution can be calculated with the tools of information
theory and used to define a path variational principle rather like
the action principle of classical and quantum mechanics.

This approach was proposed originally by Turkington and developed
further by the present author and others (see, for example, \cite{turkington2012optimization},
\cite{kleeman2012nonequilibrium} and \cite{Kle14}). It has shown
very promising results for the relaxation of the first two moments
of slow variables in a variety of simple inviscid turbulent systems.
The information theoretic approach can also be applied to the study
of predictability of slow variables. These have a probability density
function which relaxes toward an equilibrium density which applies
to slow variables unconstrained by initial condition data. The degree
of statistical disequilibrium of the system can be taken as a measure
of the predictability of the slow variables and can be quantified
using the relative entropy of the evolving and equilibrium densities.
A review of this predictability approach can be found in \cite{Kle11}.
This perspective on predictability fits well with the Turkington approach
since the relative entropy functional is easily calculated there.

In this publication we extend the just mentioned theoretical framework
from the inviscid case to the realistic forced dissipative case to
allow for application to more realistic fluid systems and their practical
predictability problems\footnote{We have in mind in particular systems relevant to atmosphere/ocean
science.} Furthermore we will develop a set of tools for analyzing thermodynamical
trajectories both in the near and far from equilibrium case which
will be of use in the analysis of a wide class of non-equilibrium
thermodynamical problems. As we shall see there are strong connections
between the results described and the previously mentioned more empirical
results of \cite{oett} and \cite{seifert2012stochastic}. These deserve
careful future study in practical statistical systems of various types.

The method to be used has one essential and practically important
limitation. It assumes that the slow variables density belongs to
a particular restricted class. Such a class needs to be justified
a posteriori using numerical simulations. In the specific case to
be considered this will consist of general multivariate Gaussian densities.
Considerable experience with practical turbulence models by the author
(see in particular \cite{klee04d} and \cite{kleeman05a}) justifies
this choice for the current problem however different applications
will require renewed justification and possible modification of the
class. Without an assumption of this type theoretical progress is
not possible. In addition the class used has a strong impact on the
degree of difficulty of the theoretical calculations as we shall see
below in section 4. For more discussion on this point also see subsection
2.1 below.

The format of this publication is as follows: In section 2 the relevant
material from previous work will be summarized. In section 3 the issue
of asymptotic equilibration and the consequent fundamental limits
to predictability will be analyzed. The generalization of previous
results to a forced dissipative system will then be displayed. Finally
a numerical method for obtaining far from equilibrium thermodynamical
behavior will be derived. In section 4 the family of slow variable
densities will be restricted to both a general Gaussian and one with
a diagonal covariance matrix. Explicit calculations of the theoretical
tensors required in sections 2 and 3 will then be performed with this
restriction. The result will be a complete set of analytical results
and numerical techniques for the case of Gaussian slow variable densities.
The machinery thus developed will hopefully have wide application
to many statistical physics applications and in particular to realistic
models of turbulence. Section 5 contains a summary and some final
comments.

\section{Review of relevant past results}

\subsection{Trial densities}

The approach taken by Turkington and co-workers is to assume that
the slow variable marginal densities $\hat{\varrho_{s}}$ are instantaneously
of the Zubarev maximum entropy form
\begin{equation}
\hat{\varrho}_{s}(x)=C\exp\left[\lambda_{i}A^{i}(x)\right]\label{MAXENT}
\end{equation}

while the total densities for the entire statistical system are of
the form
\[
\hat{\varrho}\left(x\right)=Z^{-1}\left(\lambda,\beta\right)\hat{\varrho}_{s}(x)\hat{\varrho}_{eq}(x,\beta)
\]

where $A_{i}$ are the slow variables; $\lambda_{i}$ and $\beta$
are generalized inverse temperatures; $Z$ is the partition function
which normalizes the density and $\hat{\varrho}_{eq}$ is the density
for the fully system which applies asymptotically in time (see below).
The maximum entropy principle constrains the expectations of $A_{i}$
to satisfy
\[
\left\langle A_{i}\right\rangle =a_{i}
\]

and the moments $a_{i}$ are Legendre transforms of the $\lambda_{i}$
at a given $\beta$. Densities of type $\hat{\varrho}$ are referred
to as trial densities. As previously mentioned they are never exact
but are expected to be very good approximations of actual evolving
marginal densities. Note that they also pin the ``thermodynamical''
variables for the problem i.e. the $a_{i}$ or the $\lambda_{i}$.
The complete set of trial densities describe a manifold coordinatized
by the thermodynamical variables. This is the domain of information
geometry (see \cite{ama00}). The asymptotic density $\hat{\varrho}_{eq}$
can be deduced in the manner of Gibbs for a Hamiltonian system but
for more general autonomous dynamical systems will usually only be
able to be determined approximately from numerical or physical experiments.
The implications of this will be discussed further in section 3 below.
Note also from our method of definition that the trial density will
become a (possibly approximate) equilibrium density when $\lambda=0$.
In section 3 we shall see that this is the only possible thermodynamical
equilibrium for the formalism to be outlined in section 2.

\subsection{Path dependent information loss}

The full density for a Hamiltonian\footnote{The generalization to an autonomous dynamical system is considered
below in section 3.} dynamical system satisfies the Liouville equation:
\[
\frac{\partial\varrho}{\partial t}+L\varrho=0
\]

with the differential operator $L$ determined by the underlying dynamical
system via
\begin{align*}
L & =C_{j}\frac{\partial}{\partial x_{j}}\\
\frac{\partial x_{k}}{\partial t} & =C_{k}(x)
\end{align*}

On the other hand the trial density just specified will in general
not satisfy this Liouville equation in the sense that any trial density
evolved according to the Liouville equation will no longer remain
within the manifold of trial densities. Indeed it is possible \cite{Kle14}
to use the relative entropy between this evolved trial density and
a general trial density to measure this discrepancy. It has the form
\begin{align}
IL & =\left(\Delta t\right)^{2}\mathscr{L}(\dot{\lambda},\lambda)+O\left(\left(\Delta t\right)^{3}\right)\label{IL}\\
\mathscr{L}(\dot{\lambda},\lambda) & \equiv\frac{1}{2}\left\langle R^{2}\right\rangle \geq0\\
R & \equiv\left(\partial_{t}+L\right)\log\hat{\varrho}=\dot{\lambda}_{i}\left(A_{i}-a_{i}\right)+\lambda_{i}LA_{i}
\end{align}

where the Liouville equation evolution time is $\Delta t$; the overdot
indicates a time derivative and where the summation convention is
employed. The function $\mathscr{L}$ has the form
\begin{align}
\mathscr{L}(\dot{\lambda},\lambda) & =\frac{1}{2}\left(\dot{\lambda}_{i}h_{ij}(\lambda)\dot{\lambda}_{j}-2\dot{\lambda}_{i}M_{i}(\lambda)+\phi(\lambda)\right)\nonumber \\
h_{ij} & \equiv\left\langle \left(A_{i}-a_{i}\right)\left(A_{j}-a_{j}\right)\right\rangle \nonumber \\
M_{i} & \equiv-\lambda_{j}\left\langle (A_{i}-a_{i})(LA_{j})\right\rangle =\left\langle LA_{i}\right\rangle \nonumber \\
\phi & \equiv\lambda_{i}\lambda_{j}\left\langle LA_{i}LA_{j}\right\rangle \label{explicitlagr}
\end{align}

A couple of important observations may be made about $\mathscr{L}$:
Firstly like a classical dynamical Lagrangian it is quadratic in $\dot{\lambda}$.
Secondly since the $\lambda$ are a thermodynamical coordinatization
of the trial density manifold, one can consider a coordinate transformation
and under this it is easily seen that $h$, $M$ and $\phi$ will
transform as a tensors, vectors and scalars respectively. Such transformations
can be very convenient for physical interpretation as well as mathematical
manipulation (see sections 3 and 4 below).

The Liouville discrepancy in (\ref{IL}) can also be regarded as the
information loss incurred in insisting that time dependent densities
remain within the trial manifold. The convergence properties of the
expansion (\ref{IL}) have not been rigorously investigated however
a heuristic argument can be made as follows: If a path is considered
in which the thermodynamical variable $\lambda$ relaxes at the rate
observed in numerical simulations (i.e. realistically) then $\mathscr{L}$
may be non-dimensionalized with this time scale $t_{r}$. The first
term of $IL$ then becomes 
\[
\left(\frac{\Delta t}{t_{r}}\right)^{2}\mathscr{L}_{non}
\]

where, because of the quadratic form of $\mathscr{L}$ in $\dot{\lambda}$
and dimensional arguments for other terms within, $\mathscr{L}_{non}\sim O(1)$.
Similar arguments apply to the higher $n$'th terms for $IL$ which
can be shown be polynomials in $\dot{\lambda}$ of order $n+1$. Thus
convergence should occur providing that $\Delta t$ is a significantly
small fraction of the typical thermodynamical relaxation time scale
$t_{r}$ perhaps having a time scale roughly that of the fastest thermodynamical
variable considered. The non-realistic paths with faster relaxation
of $\lambda$ are expected to have much higher values of $IL$ and
not be relevant to calculations of interest (see below). Notice however
that the limit $\Delta t\rightarrow0$ is not useful. $\Delta t$
must be finite.

Consider an arbitrary path $\widehat{\lambda}$ within the trial manifold
and with total information loss $TIL$. Approximating the Riemann
sum as an integral\footnote{Such an approximation will be valid providing $\Delta t<<t-t_{0}$}
we obtain
\begin{align}
TIL\left[\widehat{\lambda}\right] & =\Delta t\mathcal{S}\left[\widehat{\lambda}\right]+O(\left(\Delta t\right)^{2})\label{action}\\
\mathcal{S}\left[\widehat{\lambda}\right] & \equiv\int_{t_{0}}^{t}\mathscr{L}(\dot{\lambda},\lambda)dt\nonumber 
\end{align}

where the integral is implicitly a line integral along $\hat{\lambda}$
and $\mathcal{S}$ is the action along the path corresponding to the
Lagrangian $\mathscr{L}$. Note again also the ``peculiar'' linear
dependence on the discrepancy time step $\Delta t$.

If we disregard the higher order terms in (\ref{action}) then we
have an information loss associated with an arbitrary thermodynamic
path. It thus seems natural to consider, in analogy with classical
mechanics, the extremal path between two points in thermodynamical
space which minimizes this loss. This will be the path through the
trial manifold which best respects Liouvillian evolution. In general
though we cannot prescribe a final trial density. Rather a typical
practical situation is when we have an arbitrary initial trial density
(corresponding to a general point in thermodynamic space) and we wish
to determine the relaxation path toward an asymptotic equilibrium
density.

One approach to this problem would be to consider the unique extremal
paths between the initial condition and all possible endpoints then
use the information loss of these extremal paths to find the final
point with least information loss on it's extremal. This amounts to
a double optimization procedure. It will certainly define a unique
thermodynamic trajectory $\theta(t)$. It has however a rather peculiar
property.

Consider two future times $t_{2}>t_{1}>t_{0}$ and consider the extremal
path $\eta(t)$ between the point $\theta(t_{2})$ and the initial
prescribed point $\theta(t_{0})=\eta(t_{0})$. It turns out that $\theta(t_{1})\neq\eta(t_{1})$
and this discrepancy can often be of the same order as $\theta(t_{2})-\theta(t_{0})$.

Despite the above conceptual conundrum numerical validation studies
show that $\theta(t)$ is often a good approximation to the observed
thermodynamical evolution.

Another approach to the problem is to follow the obvious analogy between
the path information loss and an action and therefore a path integral.
This can be done by invoking a generalized Boltzmann principle to
define a (Wiener) path weight measure $W$ via
\[
W\left[\widehat{\lambda}\right]=\exp\left[-\Delta t\mathcal{S}\left[\widehat{\lambda}\right]\right]
\]

Then each possible thermodynamical endpoint $\lambda(t_{2})$ can
be assigned a weighting according to the (path) integral of $W$ over
all possible paths from the prescribed initial point $\theta(t_{0})$
to the considered endpoint. This differs from the first optimization
method where only the extremal path $\eta(t)$ is considered. Obviously
though the methods coincide in the limit of large $\Delta t$.

This is entirely analogous to the Feynman path integral formalism\footnote{Strictly it is the Wick rotated Wiener path integral which is actually
better defined mathematically}. Thermodynamical points are then assigned a non-negative ``consistency''
distribution which is analogous to a (complex) quantum wave-function.
A final thermodynamical trajectory can be defined by finding the maximum
of this distribution. In the formal limit\footnote{This is formal because as noted earlier $\Delta t$ should be smaller
than $t_{r}$ for convergence of $IL$} of $\Delta t$ large the two methods coincide in exactly the same
sense that quantum mechanics becomes classical mechanics as $\hbar\rightarrow0$.
The dependence of results for the path integral approach on $\Delta t$
remains unexplored and is potentially very interesting as any deviation
will amount to a ``quantum'' effect on the thermodynamics. Since
it is unclear at this point whether this second approach yields significantly
superior results experimentally we shall take advantage of the formal
limit of large $\Delta t$ to make the mathematics more tractable.
In this ``weak noise'' limit many useful results are available as
we shall see in the next section (see also \cite{Kle14}).

It is interesting to note that the consistency distribution for thermodynamical
variables $\lambda$ introduced by the author resembles conceptually
the stochastic thermodynamics of Seifert and others (see, for example
\cite{seifert2012stochastic}) who assume that thermodynamical variables
have an associated density which satisfies a Fokker Planck equation.
Indeed the consistency distribution satisfies a Wick rotated Schrödinger
equation and hence also defines a continuous Markov process. A detailed
comparison of the two approaches would be interesting as the approach
outlined here provides a possible statistical physics underpinning
to the more practical approach taken in stochastic thermodynamics.
Indeed the information loss Lagrangian discussed above is precisely
defined by the underlying dynamical system and the choice of the slow
variables of that system.

\section{Thermodynamics, equilibration and forced dissipative generalizations}

A subject of great interest in practical prediction problems concerns
the physical factors controlling the fundamental predictability time
limit. As a direct consequence the asymptotic convergence of trial
densities toward a statistical equilibrium is of central interest.
In terms of the path integral formalism of the previous section this
involves the large time behavior of the maxima of the consistency
distribution. The consequent implicit deviations from the equilibrium
density in this situation also give direct information as to which
patterns of slow variables are most predictable.

As is noted in \cite{Kle14}, the mathematics of this situation are
considerably simplified when the formal limit of large $\Delta t$
is considered. In general one can then factorize the consistency distribution
as
\begin{equation}
\psi(\lambda,t)=\exp\left(-\Delta t\left(f_{s}(\lambda(t))-f_{s}(\lambda(0))\right)\right)\rho(\lambda,t)\label{prod}
\end{equation}
where $f_{s}$ satisfies the stationary Hamilton-Jacobi equation\footnote{The Hamiltonian involved is that naturally associated with the Lagrangian
$\mathscr{L}$}:
\begin{equation}
\left(\frac{\partial f_{s}}{\partial\lambda_{k}}(\lambda)+M_{k}(\lambda)\right)^{t}h^{-1}\left(\lambda\right)\left(\frac{\partial f_{s}}{\partial\lambda_{k}}(\lambda)+M_{k}(\lambda)\right)=\phi\left(\lambda\right)\label{HamJ}
\end{equation}

In this weak noise limit of large $\Delta t$, $\rho$ is approximately
the density for a time dependent multivariate Onstein-Uhlenbeck (OU)
process and the maximum $\alpha$ of the density $\rho$ with respect
to $\lambda$ satisfies \uline{for all times} the equation
\begin{equation}
\dot{\alpha}_{i}=h_{ik}^{-1}(\alpha)\left(\frac{\partial f_{s}}{\partial\lambda_{k}}(\alpha)+M_{k}(\alpha)\right)\label{Oet}
\end{equation}

This equation has been analyzed in depth in \cite{turkington2012optimization}
as his ``stationary closure'' thermodynamics. It is shown there
to be of the GENERIC form proposed by Öttinger \cite{oett} for irreversible
thermodynamics. Note though that the RHS is determined analytically
once slow variables are identified rather than being empirically determined.
Furthermore in the case discussed here as opposed to \cite{turkington2012optimization},
we need to also consider the first factor on the RHS of (\ref{prod})
when calculating the peak of the consistency distribution to obtain
the thermodynamics. Simple cases show that this causes an additional
realistic ``spin up'' effect but it also modifies the nature of
the equilibration process as will be discussed below.

In order that a thermodynamical trajectory converges asymptotically
to an equilibrium value it is easy to see that this dynamical equation
must also converge to some value $\alpha^{*}$. Of course it remains
to be shown that this convergence actually occurs for a particular
set of initial conditions. Nevertheless we have the general result
\begin{prop}
Assuming $h$ is invertible at $\alpha^{*}$ then equilibrium can
occur at $\alpha=\alpha^{*}$ if and only if $\phi(\alpha^{*})=0$.
\end{prop}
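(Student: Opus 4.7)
The plan is to analyze the equilibrium condition $\dot{\alpha} = 0$ directly from equation (\ref{Oet}) and then feed the resulting relation into the Hamilton-Jacobi equation (\ref{HamJ}). Both implications of the biconditional reduce to a short algebraic argument once one exploits the positive-definiteness of $h^{-1}$ at $\alpha^*$, which itself follows from $h$ being a covariance matrix and therefore positive semi-definite.

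For the forward direction, I would argue as follows. Equilibrium at $\alpha^*$ means the trajectory described by (\ref{Oet}) has a stationary point there, i.e.\ $h_{ik}^{-1}(\alpha^*)\bigl(\partial_{\lambda_k} f_s(\alpha^*) + M_k(\alpha^*)\bigr) = 0$. Invertibility of $h$ at $\alpha^*$ means $h^{-1}(\alpha^*)$ is itself invertible, so we may left-multiply by $h(\alpha^*)$ to conclude $\partial_{\lambda_k} f_s(\alpha^*) + M_k(\alpha^*) = 0$ for every $k$. Substituting this vanishing vector into the left-hand side of (\ref{HamJ}) yields $\phi(\alpha^*) = 0$ directly.

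For the converse, suppose $\phi(\alpha^*) = 0$. Set $v_k := \partial_{\lambda_k} f_s(\alpha^*) + M_k(\alpha^*)$. The Hamilton-Jacobi equation evaluated at $\alpha^*$ reads $v^{t} h^{-1}(\alpha^*) v = 0$. Here is where the key structural input enters: $h_{ij} = \langle (A_i - a_i)(A_j - a_j)\rangle$ is a covariance matrix, hence positive semi-definite; combined with the hypothesis that it is invertible at $\alpha^*$, it is in fact positive definite there, and so is $h^{-1}(\alpha^*)$. Therefore the quadratic form $v^{t} h^{-1}(\alpha^*) v$ vanishes only for $v = 0$. This gives $\partial_{\lambda_k} f_s(\alpha^*) + M_k(\alpha^*) = 0$, and then (\ref{Oet}) immediately yields $\dot{\alpha}(\alpha^*) = 0$, so $\alpha^*$ is an equilibrium.

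The main obstacle, such as it is, is the converse direction, because it requires one to upgrade positive semi-definiteness of $h$ to strict positive-definiteness of $h^{-1}$ in order to conclude $v = 0$ from the vanishing of the quadratic form. The invertibility hypothesis in the statement is precisely what enables this upgrade. Everything else is a direct substitution between (\ref{Oet}) and (\ref{HamJ}); no analysis of the dynamics away from $\alpha^*$, and no information about $f_s$ beyond its appearance in the Hamilton-Jacobi equation, is needed.
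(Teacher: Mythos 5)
Your proof is correct and follows essentially the same route as the paper's: both directions are handled by substituting the stationarity condition from equation (\ref{Oet}) into the Hamilton--Jacobi equation (\ref{HamJ}), with the converse resting on the positive definiteness of $h^{-1}(\alpha^{*})$ obtained from $h$ being an invertible covariance matrix. The only cosmetic difference is that in the forward direction you first upgrade $h^{-1}v=0$ to $v=0$ before substituting, whereas the paper plugs $h^{-1}v=0$ into the quadratic form directly; this changes nothing of substance.
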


\begin{proof}
If $\alpha^{*}$ is an equilibrium point then we must have
\[
h^{-1}(\alpha^{*})\left(\frac{\partial f_{s}}{\partial\lambda_{k}}(\alpha^{*})+M_{k}(\alpha^{*})\right)=0
\]

however the function $f_{s}$ satisfies the stationary Hamilton-Jacobi
equation
\[
\left(\frac{\partial f_{s}}{\partial\lambda_{k}}(\alpha^{*})+M_{k}(\alpha^{*})\right)^{t}h^{-1}\left(\alpha^{*}\right)\left(\frac{\partial f_{s}}{\partial\lambda_{k}}(\alpha^{*})+M_{k}(\alpha^{*})\right)=\phi\left(\alpha^{*}\right)
\]

so the RHS must vanish. Conversely suppose $\phi\left(\alpha^{*}\right)=0$
then the LHS of the previous HJ equation must vanish. Since $h$ is
invertible at $\alpha^{*}$ and it is a covariance matrix of the slow
variables then $h^{-1}\left(\alpha^{*}\right)$ is positive definite
implying that the RHS of (\ref{Oet}) must vanish at this point.
\end{proof}
We now discuss separately the cases where the underlying dynamics
are Hamiltonian and where they are of a more general autonomous type.

\subsection{Hamiltonian dynamics}

In the non-singular case $\alpha^{*}$ is unique and zero.
\begin{prop}
Hamiltonian systems have a unique equilibrium $\alpha^{*}=0$ providing
the matrix
\[
Q_{ij}\equiv\left\langle (LA_{i})(LA_{j})\right\rangle 
\]

is everywhere non-singular. Furthermore we have $M(0)=\nabla f_{s}(0)=0$.
\end{prop}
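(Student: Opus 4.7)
The plan is to reduce the statement to the preceding proposition and then exploit the Hamiltonian structure pointwise at $\lambda=0$. First I would invoke the preceding proposition so that the equilibrium condition becomes $\phi(\alpha^{*})=0$. From the explicit form of $\phi$ in (\ref{explicitlagr}), one has $\phi(\lambda)=\lambda_{i}\lambda_{j}\langle LA_{i}\,LA_{j}\rangle=\lambda^{t}Q(\lambda)\lambda$, and $Q$ is by construction a Gram matrix of the functions $LA_{i}$ under the trial-density inner product, hence positive semidefinite. The hypothesis that $Q$ is everywhere nonsingular upgrades this to strict positive definiteness at $\alpha^{*}$, so $\phi(\alpha^{*})=0$ forces $\alpha^{*}=0$, giving uniqueness.

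Next I would verify $M(0)=0$. Setting $\lambda=0$ reduces the trial density to a normalized multiple of $\hat{\varrho}_{eq}$. For a Hamiltonian system the advecting field $C$ of $L=C_{j}\partial_{x_{j}}$ is divergence-free and $\hat{\varrho}_{eq}$ is Liouville-invariant, so integration by parts yields $\int\hat{\varrho}_{eq}(LA_{i})\,dx=-\int A_{i}\,L\hat{\varrho}_{eq}\,dx=0$. Thus $M_{i}(0)=\langle LA_{i}\rangle\big|_{\lambda=0}=0$.

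Finally, to obtain $\nabla f_{s}(0)=0$, I would evaluate the stationary Hamilton-Jacobi equation (\ref{HamJ}) at $\lambda=0$. Using $M(0)=0$ and the fact that $\phi(0)=0$ (immediate from the quadratic form $\lambda^{t}Q(\lambda)\lambda$), the identity collapses to $(\nabla f_{s}(0))^{t}h^{-1}(0)(\nabla f_{s}(0))=0$. Because $h$ is the covariance matrix of the slow variables, $h^{-1}(0)$ is positive definite (assuming the invertibility hypothesis carries over to the origin, which is automatic for nondegenerate slow variables), and this quadratic form therefore vanishes only when $\nabla f_{s}(0)=0$.

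The main subtlety is the identity $M(0)=0$, which relies on the Hamiltonian property in an essential way through both incompressibility of the flow and invariance of $\hat{\varrho}_{eq}$ under $L$. In the general autonomous setting these structural features can both fail, and so this step is exactly where the restriction to Hamiltonian dynamics genuinely enters; the remaining algebraic parts of the argument are robust and transfer, with modification, to the forced-dissipative case discussed subsequently.
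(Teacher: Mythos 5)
Your proof is correct and follows essentially the same route as the paper's: reduce to $\phi(\alpha^{*})=0$ via the preceding proposition, use positive definiteness of $Q$ to force $\alpha^{*}=0$, establish $M(0)=0$, and evaluate the stationary Hamilton--Jacobi equation at the origin to conclude $\nabla f_{s}(0)=0$. The only (harmless) local differences are that you get positive semidefiniteness of $Q$ directly from its Gram-matrix form where the paper decomposes $Q_{ij}=C_{ij}+M_{i}M_{j}$ into a covariance plus a rank-one piece, and you derive $M(0)=0$ by integration by parts against $\hat{\varrho}_{eq}$ where the paper simply reads it off the explicit factor of $\lambda_{j}$ in $M_{i}=-\lambda_{j}\left\langle (A_{i}-a_{i})(LA_{j})\right\rangle$.
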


\begin{proof}
It is obvious from it's definition that $\phi(0)=0$ and $\alpha^{*}=0$
is an equilibrium. Furthermore we have
\[
M_{i}=-\alpha_{j}\left\langle A_{i}(LA_{j})\right\rangle 
\]

implying $M(0)=0$. This then implies that $\nabla f_{s}(0)=0$ using
the last proposition. Now by definition we have
\[
\phi(\alpha^{*})=\alpha_{i}^{*}Q_{ij}\alpha_{j}^{*}
\]

and it is trivial to show that
\[
Q_{ij}=\left\langle \left(LA_{i}-M_{i}\right)\left(LA_{j}-M_{j}\right)\right\rangle +M_{i}M_{j}
\]

where the first matrix on the RHS is a covariance matrix $C_{ij}$.
Thus
\[
\alpha_{i}Q_{ij}\alpha_{j}=\alpha_{i}C_{ij}\alpha_{j}+\left(\alpha_{i}M_{i}\right)^{2}\geq0
\]

implying $Q$ is positive definite since it is invertible. Thus $\phi(\alpha^{*})=0$
iff $\alpha^{*}=0$.
\end{proof}
If we assume that convergence of $\alpha$ does occur then for large
$t$ the density $\rho$ is approximately that of a standard OU process.
In both cases the drift vector and noise covariance matrix are determined
analytically by $\Delta t$ and the moments $g$ and $M$ and the
function $\phi$ from equation (\ref{explicitlagr}). Given this situation
the asymptotic time behavior of the maximum of the consistency distribution
$\hat{\lambda}$ from can be determined from (\ref{prod}) using the
fact that the asymptotic $\rho$ is Gaussian with variance $\sigma$.
It is then easy to show that it satisfies the implicit equation
\begin{equation}
\hat{\lambda_{i}}=\alpha_{i}-\sigma\frac{\partial f_{s}}{\partial\lambda_{i}}(\hat{\lambda})\label{implicit}
\end{equation}

where $\sigma$ is the equilibrium covariance matrix of the standard
multivariate OU process. Now we have seen above that both $M$ and
$\nabla f_{s}$ vanish asymptotically and so may be expanded for large
$t$ as
\begin{align}
M & =J\alpha'\nonumber \\
\nabla f_{s} & =G\alpha'\label{diss}
\end{align}

where it may be shown that $J_{ij}=\frac{\partial M_{i}}{\partial\lambda_{j}}(0)=-J_{ji}$
and further that $G$ symmetric and positive definite. The former
matrix can be easily obtained analytically. The latter can be obtained
(as discussed in \cite{turkington2012optimization}) by considering
the small perturbation limit of the stationary HJ equation. This results
in the Riccati equation
\begin{align}
\left(G+J\right)^{t}h^{-1}(0)\left(G+J\right) & =N\label{Riccati}\\
N_{ij} & \equiv\frac{\partial^{2}\phi}{\partial\lambda_{i}\partial\lambda_{j}}(0)
\end{align}

This equation may be solved for $G$ using standard techniques. The
solution will be unique providing that the matrix
\[
A\equiv h^{-1}(0)\left(G+J\right)
\]

has only eigenvalues with negative real parts. This condition is known
as a stabilizing criteria in Riccatti parlance and in our case this
means that all perturbations near equilibrium eventually equilibrate.

We may now determine $\sigma$ by solving the appropriate Lyupanov
equation for the OU process (see \cite{gard} subsection 4.4.6)
\begin{align}
A\sigma+\sigma A^{t} & =h^{-1}(0)\label{Lyupanov}
\end{align}

Inserting (\ref{diss}) into (\ref{implicit}) we can solve the latter
to first order as 
\[
\hat{\lambda}'=\left(I+\sigma G\right)^{-1}\alpha'
\]

which will be a consistent perturbative solution providing $\sigma$
and $G$ are no larger than order $1$. This therefore determines
the asymptotic thermodynamical behavior of the system. It is easily
seen that this satisfies the relaxation equation
\begin{align}
\frac{\partial\hat{\lambda}'}{\partial t} & =P^{-1}h^{-1}(0)\left(G+J\right)P\hat{\lambda}'\label{decay}\\
P & \equiv\left(I+\sigma G\right)\nonumber 
\end{align}

The relaxation equation for $\alpha'$ is the same except that $P$
is set to unity. Thus the two relaxations have a set of decay modes
with the same eigenvalues but differing eigenvectors. The slowest
decaying eigenvector gives the most predictable mode for the system
while the corresponding eigenvalue gives the fundamental predictability
limit time scale. In order to better physically understand this mode
it will be often useful to take the Legendre transform and solve instead
for the relevant slow variable moment modes. This transform can be
considered to be a trial density manifold co-ordinate transformation
(see section 2 above). Given the tensor transformational properties
of $h$, $M$ and $\phi$ and the properties of the unique equilibrium
point, one may verify straightforwardly that equations (\ref{Riccati}),
(\ref{Lyupanov}) and (\ref{decay}) hold where one uses the transformed
tensors in their new co-ordinates and performs gradient calculations
also in these new co-ordinates and evaluates the resulting matrices
again in the new co-ordinates for the equilibrium point. Indeed the
entire framework is conveniently and attractively covariant. This
fact will allow us to use more convenient moment co-ordinates in the
next section.

\subsection{Autonomous dynamics}

The information loss formalism of section 2 has been generalized to
the autonomous case in \cite{Kle15}. The third equation of (\ref{IL})
is modified to
\begin{align}
R & =\left(\partial_{t}+L\right)\log\hat{\varrho}+\frac{\partial C_{k}}{\partial x_{k}}=\dot{\lambda}_{i}\left(A_{i}-a_{i}\right)+\lambda_{i}LA_{i}-\beta LF+\frac{\partial C_{k}}{\partial x_{k}}\label{aut}\\
\hat{\varrho}_{eq} & =D\exp\left(-\beta F(x)\right)\nonumber 
\end{align}

In the Hamiltonian case the divergence of $C$ vanishes as does $LF$
since the Gibbs density exponent has a zero Poisson bracket with the
Hamiltonian. Neither of these conditions are satisfied for a general
autonomous system. Nevertheless if the dynamical system possesses
an equilibrium density (which we assume) then it satisfies the steady
state Liouville equation:
\[
\left(L+\frac{\partial C_{k}}{\partial x_{k}}\right)\hat{\varrho}_{eq}=0
\]

which implies if we further assume that $\hat{\varrho}_{eq}$ is nowhere
vanishing that the sum of the final two terms in (\ref{aut}) vanish.
The generic form of $R$ is thus the same as the Hamiltonian case
and the conclusions of of the previous subsection carry over in their
entirety. We thus have a consistent formalism in the sense that $\lambda=0$
describes the only possible thermodynamical equilibrium and it corresponds
with a trial density of $\hat{\varrho}_{eq}$.

Of course in a certain sense we have buried the issue since knowledge
of $h$, $M$ and $\phi$ requires knowledge of $F$. In the Hamiltonian
case this is generally quite accurately known\footnote{It is interesting to note though that strictly speaking we only know
that the Gibbs density is an invariant and there may be a large number
of these within a Hamiltonian system. Thus the form of the equilibrium
density must eventually be deduced either empirically or by appeal
to statistical independence and locality arguments (see \cite{ll80}
section 2).} but in the more general autonomous case it requires observational
estimation. Put another way, if we assume that $F(x)$ has a particular
restricted form (for example quadratic for a Gaussian trial density)
then the vanishing of the final two terms in (\ref{aut}) only occurs
approximately and the better the match of $F$ to observation then
the better this approximation becomes.

\subsection{General solutions using a numerical method}

It would of course be desirable to have at least a numerical method
for exploring thermodynamical trajectories far from equilibrium. This
can be achieved in principle as follows: In the limit of $\Delta t$
large the relevant action $\mathcal{S}_{cl}$ for determining thermodynamical
behavior is that from the extremal trajectory. We evaluate each possible
endpoint according to this value and choose that with the minimal
value as our thermodynamical point. Evidently we therefore seek an
endpoint satisfying
\[
\frac{\partial\mathcal{S}_{cl}}{\partial\lambda_{i}}=0
\]
 Now as noted in \cite{feynman1965quantum} p28, the LHS here is simply
the conjugate momentum $p_{i}\equiv\frac{\partial\mathscr{L}}{\partial\dot{\lambda}_{i}}$
at this endpoint. Thus we consider an extremal trajectory with final
conjugate momentum zero and a specified initial condition $\lambda(0)$
and extract $\lambda(T)=\hat{\lambda}(T)$. The trajectories involved
are, of course, solutions of the Euler-Lagrange equations for the
problem and as noted in \cite{Kle14} have the form of ``charged
particle'' geodesics in a manifold with Riemannian metric $h$ with
the particle subject to an external ``electromagnetic'' vector $(M,\phi)$.
Given the boundary condition requirements imposed, the equations are
most conveniently expressed using coupled equations from standard
Hamiltonian mechanics: 
\begin{align*}
\dot{\lambda}_{i} & =\frac{\partial\mathcal{H}}{\partial p_{i}}=h_{ij}^{-1}(p_{j}+M_{j})\\
\dot{p}_{i} & =-\frac{\partial\mathcal{H}}{\partial\lambda_{i}}=\frac{\partial\phi}{\partial\lambda_{i}}+\frac{\partial M_{k}}{\partial\lambda_{i}}h_{kj}^{-1}(p_{j}+M_{j})+\frac{1}{2}(p_{j}+M_{j})\frac{\partial h_{jl}^{-1}}{\partial\lambda_{i}}(p_{l}+M_{l})\\
\lambda(0) & =\lambda_{0}\\
p(T) & =0
\end{align*}

where the Hamiltonian is obtained in the usual manner for the Lagrangian
$\mathscr{L}$ discussed in section 2. This is a linear ordinary differential
equation boundary value problem albeit with an unusual endpoint condition
in $p$ rather than $\lambda$. There is a large literature on a wide
variety of different numerical methods for the solution of such problems
with a classical reference being \cite{ascher1994numerical}. It is
worth noting that in the usual classical mechanics one typically specifies
$\lambda$ and $p$ at the initial time but of course here the initial
$p$ is unknown as opposed to a fixed final $p$ of zero.

In a follow up publication to the present one the practicalities of
various numerical methods of solution will be explored in a range
of different turbulence problems. An attractive feature of the approach
outlined is that avoids the need to solve a multi-dimensional non-linear
Hamilton-Jacobi partial differential equation which is required to
integrate equation (\ref{Oet}) directly.

\section{Gaussian trial densities and fluid dynamical systems}

In order to allow for a practical implementation of the formalism
outlined in the previous section we require certain moments of the
trial density as well as the partition function $Z(\lambda,\beta)$.
Unless this density is chosen from a manageable family then analytical
expressions for these quantities may be difficult to obtain. In many
problems of interest in fluid dynamics, numerical simulations give
strong evidence for densities with only a small amount of non-Gaussian
behavior (see, for example, the following work by the author and collaborators:
\cite{kleeman05a}, \cite{kleeman2012nonequilibrium} and \cite{klee04d}).
Motivated by these considerations we consider the case that the trial
density is a multivariate Gaussian. Furthermore we shall restrict
initially this density in such a way that the covariance of slow and
fast variables is always zero. This is for pedagogical clarity here
but is often empirically justified. It could be relaxed if required.

The general trial density may therefore be written as
\begin{eqnarray*}
\hat{\varrho} & = & \left\{ 2\pi\right\} ^{-\frac{n}{2}}\sqrt{g}\exp\left[-\frac{1}{2}g^{ij}\left(x_{i}-\mu_{i}\right)\left(x_{j}-\mu_{j}\right)\right]\\
g & \equiv & \det\left(g_{ij}\right)=\left(\det\left(g^{ij}\right)\right)^{-1}\\
g_{ij} & = & \left\langle \left(x_{i}-\mu_{i}\right)\left(x_{j}-\mu_{j}\right)\right\rangle _{\hat{\varrho}}\\
\mu_{i} & = & \left\langle x_{i}\right\rangle _{\hat{\varrho}}
\end{eqnarray*}

where the indices range over the full set of $n$ variables with slow
variables occupying the first $m$ slots. Indices are raised, lowered
and contracted in the usual tensor manner. The covariance matrix $\left\{ g_{ij}\right\} $
is block diagonal with respect to fast and slow variables as therefore
is the inverse $\left\{ g^{ij}\right\} $. A convenient co-ordinatization
of the trial density manifold here is provided by the slow variable
covariance matrix and mean vector i.e. all moments of order 1 and
2. There are clearly $\frac{1}{2}m(m+3)$ in total. As was noted at
the end of section 3 these co-ordinates are the physically transparent
ones for asymptotic predictability analysis and are also more mathematically
tractable. The co-ordinate transformation to the $\lambda$ used in
Section 2 and 3 is a non-linear Legendre transformation which can
be obtained explicitly using Cramer's rule for $\left\{ g^{ij}\right\} $.
The $A(x)$ of section 2 equation (\ref{MAXENT}) are obviously first
and second order polynomials of the $x_{i}$. Given the covariant
form of the formalism it is not necessary to know the nature of this
transformation explicitly.

In what follows indices will be assumed to range over $n$ values
however time derivatives will only be non-zero for slow variable co-ordinates.
This device allows a clean interpretation in terms of the Lagrangian
of section 2 albeit with transformed co-ordinates.

One of the motivations of the current theoretical development is the
study of fluid systems and their predictability. Consequently we shall
consider a restricted class of autonomous dynamical systems which
however cover a large variety of practical fluid systems. In particular
we shall assume that the base dynamical system satisfies
\begin{equation}
\frac{dx_{i}}{dt}=C_{i}(x)=B_{i}^{jk}x_{j}x_{k}+H_{i}^{k}x_{k}+X_{i}\label{dynamicalsystem}
\end{equation}

and that when $H=0$ and $X=0$ we have Hamiltonian dynamics. Thus
the Liouvillean condition applies to the quadratic term on the RHS
and so
\begin{equation}
B_{i}^{ij}=-B_{i}^{ji}\label{Liouville}
\end{equation}

with the summation convention assumed here and later\footnote{Without loss of generality one can assume that $B$ is symmetric in
the upper two indices and therefore also that
\[
B_{i}^{ij}=0
\]
}. Due to conservation principles in the underlying dynamical system
$B$ may satisfy other identities beyond (\ref{Liouville}). The second
term on the RHS of (\ref{dynamicalsystem}) can represent both linear
dissipation as well as linear instability due to an imposed mean flow.
The final term represents external forcing. In order to carry out
the program outlined in the previous sections we require the Lagrangian
$\mathscr{L}=\frac{1}{2}\left\langle R^{2}\right\rangle $ where the
expectation is calculated with respect to the assumed Gaussian trial
density. This latter assumption enables this to be done in a relatively
straightforward (albeit tedious) fashion. Details for the Hamiltonian
and more general autonomous case are in Appendix A. The result for
the Hamiltonian case with $H=X=0$ is 
\begin{eqnarray}
2\mathscr{L}=\left\langle R^{2}\right\rangle  & = & \left(\dot{\mu}_{i}-F_{i}\right)g^{ij}\left(\dot{\mu}_{j}-F_{j}\right)+\frac{1}{4}\left(\dot{g}_{ij}-2Q_{ij}\right)\left(g^{ik}g^{jl}+g^{il}g^{jk}\right)\left(\dot{g}_{kl}-2Q_{kl}\right)\nonumber \\
 &  & -2S^{i}\left(\dot{\mu}_{i}-F_{i}\right)+\Phi\label{Gaussianlagrangian}\\
F_{i} & \equiv & B_{i}^{kl}\mu_{k}\mu_{l}\nonumber \\
Q_{ij} & \equiv & \mu_{l}g_{kj}\left(B_{i}^{kl}+B_{i}^{lk}\right)\nonumber \\
S^{i} & \equiv & g^{ij}g_{ln}B_{j}^{ln}\nonumber \\
\Phi & \equiv & B_{i}^{jk}B_{l}^{mn}g^{il}\left(g_{kj}g_{nm}+g_{km}g_{nj}+g_{jm}g_{nk}\right)\nonumber \\
 &  & +\left(B_{i}^{jk}+B_{i}^{kj}\right)\left(B_{j}^{il}+B_{j}^{li}\right)g_{lk}\nonumber 
\end{eqnarray}

It should be stressed that the implied summations in this result are
over $n$ (the total number of variables in the original dynamical
system) however only the slow variables have non-zero time derivatives.
The result for the more general autonomous case has generically the
same form as (\ref{Gaussianlagrangian}) but with modified versions
of the fields $F$, $Q$, $S$ and $\Phi$.

Note also that the result has the quadratic form in time derivatives
derived in section 2 but with respect to a (Legendre) transformed
set of co-ordinates namely the slow variable means and covariances.
We can immediately read off the required $h$,$M$ and $\phi$ in
the transformed co-ordinates. Of course to use this in section 3 for
an asymptotic analysis, the values of these quantities and their various
partial derivatives must be evaluated at the point of equilibrium
$\lambda=0$ which needs to be expressed in terms of the new moment
co-ordinates. The covariant nature of our formalism also allows a
straightforward use of the results of subsection 3.3.

\subsection{Further simplifications}

In the case of many turbulent fluid systems, empirical evidence suggests
that to a good approximation the spectral modes are not strongly correlated
(see, for example, \cite{kleeman2012nonequilibrium} and \cite{TCT15})
and furthermore the variances of real and imaginary parts of these
modes are approximately equal. This allows for considerable simplification
of (\ref{Gaussianlagrangian}). We illustrate this for the Hamiltonian
case but some of the results carry over to the more realistic forced
dissipative turbulence case as will be discussed in the companion
paper. Denoting the variances by $b_{i}$ it is easily shown that
(\ref{Gaussianlagrangian}) becomes
\[
2\mathscr{L}=IL_{1}+IL_{2}+IL_{3}+IL_{4}+IL_{5}
\]

with
\begin{align*}
IL_{i} & =\sum_{i}\left(\dot{\mu}_{i}-F_{i}\right)b_{i}^{-1}\left(\dot{\mu}_{i}-F_{i}\right)\\
IL_{2} & =\sum_{i,j,m,n}\left[B_{i}^{jn}B_{i}^{jm}b_{j}b_{i}^{-1}+B_{i}^{jn}B_{j}^{im}\right]\mu_{n}\mu_{m}\\
IL_{3} & =\frac{1}{2}\sum_{i}\left\{ \left(\frac{\dot{b}_{i}}{b_{i}}\right)^{2}-4\sum_{m}\left(\frac{\dot{b}_{i}}{b_{i}}\right)B_{i}^{im}\mu_{m}\right\} \\
IL_{4} & =\sum_{i,j,k}\left\{ b_{i}^{-1}\left[\left(B_{i}^{kk}b_{k}\right)^{2}+2\left(B_{i}^{jk}\right)^{2}b_{j}b_{k}\right]+4B_{i}^{jk}B_{j}^{ik}b_{k}\right\} \\
IL_{5} & =\sum_{i,l}b_{i}^{-1}b_{l}B_{i}^{ll}\left(\dot{\mu}_{i}-F_{i}\right)
\end{align*}

where we have now explicitly included summations. As noted, many fluid
systems can be written in spectral form and then the quadratic non-linear
terms reduce to the form of equation (\ref{dynamicalsystem}) but
with an interesting restriction on the indices of $B_{i}^{jk}$. This
is often termed (see \cite{salmon-book} Chapter 5) a selection rule
and takes the form
\[
i=j+k
\]

where the indices may regarded profitably as integer vectors. This
rule follows simply from the Fourier functions orthogonality condition.
The three modes satisfying this selection rule are the modal triads
of standard turbulence theory and energy can transfer from any two
of them to a third. In many systems of interest it will also be the
case that the zero wave-number component will be an invariant of the
system and so can be set to zero without loss of generality. These
two rules have the effect of eliminating the second term in $IL_{3}$
and the final term in $IL_{4}$.

In addition the fact that these systems can be formulated in terms
of complex variables means that 
\[
B_{l}^{k_{R}k_{R}}=-B_{l}^{k_{I}k_{I}}
\]

where the suffices $R$ and $I$ refer to real and imaginary variables.
The assumption that $b_{k_{R}}=b_{k_{I}}$ then implies the first
term of $IL_{4}$ as well as all of $IL_{5}$ are both zero.

Since the dynamical system can be specified by a complex equation
rather than the real one in (\ref{dynamicalsystem}), the real and
imaginary parts of the Fourier coefficients can be conveniently assigned
a $Z_{2}$ value which may be appended to the spectral vectors and
the selection rule still holds then for the real $B_{i}^{jk}$. A
similar device may be used when dealing with vertical modes . Consider
now the term $IL_{2}$: The selection rule implies that we can write
this as
\begin{equation}
IL_{2}=\sum_{i=j+n}\left\{ b_{j}b_{i}^{-1}\left(B_{i}^{jn}\right)^{2}\left(\mu_{n}\right)^{2}+B_{i}^{jn}B_{j}^{i-n}\mu_{n}\mu_{-n}\right\} \label{IL2}
\end{equation}

where the summation extends over all triads satisfying $i=j+n$. The
reality condition for the complex dynamical variables also implies
that 
\[
\mu_{-n}=\pm\mu_{n}\equiv p(n)\mu_{n}
\]

where the ``parity'' $p(n)$ is $+1$ when $n$ refers to a real
part of a mode and $-1$ for the imaginary part. Summarizing we can
write
\begin{align*}
2\mathscr{L} & =\sum_{i}\left(\dot{\mu}_{i}-F_{i}\right)b_{i}^{-1}\left(\dot{\mu}_{i}-F_{i}\right)+\frac{1}{2}\sum_{i}\left(\frac{\dot{b}_{i}}{b_{i}}\right)^{2}\\
 & +\sum_{i=j+n}\left\{ b_{j}b_{i}^{-1}\left(B_{i}^{jn}\right)^{2}+p(n)B_{i}^{jn}B_{j}^{i-n}\right\} \left(\mu_{n}\right)^{2}+2\left(B_{i}^{jn}\right)^{2}b_{i}^{-1}b_{j}b_{n}
\end{align*}

This considerable simplification allows for a tractable analysis of
a wide variety of Hamiltonian fluid systems.

\section{Summary and future work}

In recent years a number of different approaches to the problem of
statistical disequilibrium have been proposed. These have ranged from
empirically oriented thermodynamic approaches to those based on the
evolution of the probability density of slow variables within a statistical
system. The author proposed an approach recently of the second type
which was derived from a path integral approach using a generalized
Boltzmann principle. The resulting formalism is attractive theoretically
as it is directly related to conventional quantum mechanics once a
Wick rotation is performed.

The practical application of the method was however left mainly unexplored.
In particular a method to calculate thermodynamical trajectories was
only briefly mentioned. Furthermore the formalism only applied to
idealized Hamiltonian dynamics rather than the forced dissipative
systems underlying realistic turbulence. The current presentation
attempts to remedy these practical deficiencies in several ways:
\begin{enumerate}
\item A general method is described that allows a description of near equilibrium
behavior. The linear method involved relies on the solution of Riccatti
and Lyupanov equations. The matrices derived allow for a general study
of predictability limits within turbulent systems.
\item The formalism is extended in a natural manner to consider dynamical
systems described by autonomous equations. The more general formalism
has a unique thermodynamic equilibrium.
\item The entire framework is covariant with respect to the the choice of
thermodynamical variables or co-ordinates. This enables a choice of
convenient co-ordinates such as moments or inverse temperatures.
\item A numerical method is described for exploring the thermodynamical
trajectories far from equilibrium. Unlike the near equilibrium situation
different initial conditions must be solved for separately in this
method.
\end{enumerate}
The thermodynamics derived is similar to that proposed in the GENERIC
framework by Öttinger and co-workers but in contrast is an ab initio
derivation from the underlying dynamical system \uline{providing}
that an appropriate family of approximating slow variable probability
densities are identified. The method is also conceptually similar
to the stochastic thermodynamics of Seifert and co-workers in that
thermodynamical variables are associated with a non-negative ``consistency''
distribution which evolves according to a stochastic equation\footnote{The equation here is a Wick rotated Schrödinger equation which in
a certain limit is controlled by a Fokker Planck equation. Stochastic
thermodynamics has a thermodynamical density evolving according to
a Fokker Planck equation.}.

In order to complete the program outlined in the four points above,
an appropriate approximating density family of ``trial'' densities
must be identified and it needs to be a good approximation of observed
slow variable densities for appropriate thermodynamic parameters.
Furthermore the analytical calculations involved rely on the trial
densities having tractable behavior with respect to the evaluation
of partition functions and moments. In section 4 we showed how this
can be done with the choice of Gaussian trial densities. Different
trial densities could be conceived of for different dynamical systems.
Section 4 should provide a template for this however a viewing of
the algebra involved there suggests that there may be formidable calculations
involved.

While the present paper describes general theoretical machinery it
is clearly important to validate this in interesting statistical/turbulent
systems using numerical methods. These will be performed in a follow
up publication. It would also be very useful to directly compare results
with other more empirical thermodynamical approaches since these are
after all more closely connected to concrete applications.\pagebreak{}

\appendix

\section{The Lagrangian for Gaussian trial densities}

We consider first the Hamiltonian case $H=X=0$ and then generalize
below: The Liouville residual for the trial densities is
\begin{align*}
R & =\left(\frac{\partial}{\partial t}+A_{i}\frac{\partial}{\partial x_{i}}\right)\hat{l}\\
\hat{l} & \equiv\log\hat{\varrho}
\end{align*}

Using the chain rule we have 
\[
\frac{\partial}{\partial t}=\frac{\partial g_{ij}}{\partial t}\frac{\partial}{\partial g_{ij}}+\frac{\partial\mu_{i}}{\partial t}\frac{\partial}{\partial\mu_{i}}
\]

so
\begin{eqnarray}
\frac{\partial\hat{l}}{\partial t} & = & \dot{g}_{ij}\left(\frac{\partial\ln\sqrt{g}}{\partial g_{ij}}-\frac{1}{2}\frac{\partial g^{kl}}{\partial g_{ij}}\left(x_{k}-\mu_{k}\right)\left(x_{l}-\mu_{l}\right)+\dot{\mu}_{i}g{}^{ij}\left(x_{j}-\mu_{j}\right)\right)\label{Hamcalc}
\end{eqnarray}

where we have used the fact that $r$ and $s$ are fixed.

Now
\begin{equation}
\frac{\partial\ln\sqrt{g}}{\partial g_{ij}}=\frac{1}{2}g^{-1}\frac{\partial g}{\partial g_{ij}}=-\frac{1}{2}g^{ij}\label{detderiv}
\end{equation}

by standard results for the derivative of the determinant. Similarly
\begin{equation}
\frac{\partial g^{kl}}{\partial g_{ij}}=-g^{ik}g^{jl}\label{invertderiv}
\end{equation}

again by a standard matrix result. Introduce now the following score
variables
\begin{eqnarray}
\frac{\partial\hat{l}}{\partial\mu_{i}} & = & g^{ij}\left(x_{j}-\mu_{j}\right)\equiv U^{i}\nonumber \\
\frac{\partial\hat{l}}{\partial g_{mp}} & = & \frac{1}{2}\left(g^{im}g^{jp}\left(x_{i}-\mu_{i}\right)\left(x_{j}-\mu_{j}\right)-g^{mp}\right)\equiv V^{mp}=\frac{1}{2}\left(U^{m}U^{p}-g^{mp}\right)\label{relationship}
\end{eqnarray}

where we used (\ref{detderiv}) and (\ref{invertderiv}). These variables
make the calculation of expectation values considerably easier: It
is easily verified that 
\[
\left\langle U^{i}\right\rangle =\left\langle V^{ij}\right\rangle =0
\]

as is the usual score variable situation. It is also obvious from
above that
\begin{equation}
\partial_{t}\hat{l}=\dot{g}_{mp}V^{mp}+\dot{\mu}_{m}U^{m}\label{timederiv}
\end{equation}

Because the trial density is Gaussian, the score variables are easily
shown to satisfy
\[
\left\langle U^{i}U^{j}\right\rangle =g^{ij}
\]

and
\[
\left\langle V^{ij}U^{k}\right\rangle =0
\]

In addition we have
\begin{eqnarray*}
\left\langle V^{ij}V^{kl}\right\rangle  & = & \frac{1}{4}\left\langle \left(U^{i}U^{j}-g^{ij}\right)\left(U^{k}U^{l}-g^{kl}\right)\right\rangle \\
 & = & \frac{1}{4}\left[\left\langle U^{i}U^{j}U^{k}U^{l}\right\rangle -g^{ij}g^{kl}\right]\\
 & = & \frac{1}{4}\left[g^{ik}g^{jl}+g^{il}g^{jk}\right]
\end{eqnarray*}

where we are using the fourth order Gaussian moment formulae for variables
of zero mean. Other higher Gaussian moment identities are
\begin{eqnarray*}
\left\langle U^{i}U^{j}U^{k}U^{l}\right\rangle  & = & g^{ik}g^{jl}+g^{il}g^{jk}+g^{ij}g^{kl}\\
\left\langle U^{i}U^{j}U^{k}U^{l}U^{m}U^{n}\right\rangle  & = & \sum_{Binary\;Partitions}g^{ij}g^{kl}g^{mn}
\end{eqnarray*}

where binary partitions means every partitioning of the set $ijklmn$
into three two member groups (15 in all). Finally the expectations
of the product of an odd number of $U$ is zero. The underlying dynamical
system time tendency may be re-expressed in terms of score variables
as
\begin{eqnarray}
C_{i}=B_{i}^{kl}x_{k}x_{l} & = & B_{i}^{kl}\left(\mu_{k}+g_{kj}U^{j}\right)\left(\mu_{l}+g_{lp}U^{p}\right)\nonumber \\
 & = & B_{i}^{kl}\mu_{k}\mu_{l}+\mu_{l}g_{kj}U^{j}\left(B_{i}^{kl}+B_{i}^{lk}\right)+B_{i}^{kl}g_{kj}g_{lp}U^{j}U^{p}\nonumber \\
 & \equiv & F_{i}+Q_{ij}U^{j}+P_{ijp}U^{j}U^{p}\label{AA-1}
\end{eqnarray}

Note that $F_{i}$ is the dynamical system tendency term for the first
moments. We have now 
\[
A_{i}\frac{\partial}{\partial x_{i}}\hat{l}=-g^{il}(x_{l}-\mu_{l})A_{i}=-U^{i}A_{i}
\]

and so
\[
R=\dot{g}_{ij}V^{ij}+\left(\dot{\mu}_{i}-F_{i}\right)U^{i}-Q_{ij}U^{j}U^{i}-P_{ijp}U^{j}U^{p}U^{i}
\]

Using the fact that $Q_{ij}g^{ij}=0$ and (\ref{relationship}) this
can be written more compactly as 
\[
R=\left(\dot{g}_{ij}-2Q_{ij}\right)V^{ij}+\left(\dot{\mu}_{i}-F_{i}\right)U^{i}-P_{ijp}U^{j}U^{p}U^{i}
\]

The information loss can now be computed straightforwardly using product
of score variable expectations calculated above and the result is
given in the text as equation (\ref{Gaussianlagrangian}). The identities
for $S$ and $\Phi$ in that equation use (\ref{Liouville}) repeatedly
in their derivation.

Consider now the more general case. We need to consider now a more
general equation for $R$ as is evident from equation (\ref{aut}).
Given the assumption made earlier that allows us to drop the final
two terms in that equation it becomes convenient to write
\begin{align*}
\hat{l} & =\ln\sqrt{g}+\hat{l}_{s}+\hat{l}_{eq}+Const\\
\hat{l}_{eq} & \equiv-\frac{1}{2}r^{ij}\left(x_{i}-s_{i}\right)\left(x_{j}-s_{j}\right)
\end{align*}

where $r$ and $s$ are the equilibrium inverse covariances and means
respectively which are fixed and determined empirically. Now the autonomous
Liouville residual under the approximation assumption of subsection
3.2 is given by
\[
R=\left(\frac{\partial}{\partial t}+C_{i}\frac{\partial}{\partial x_{i}}\right)\hat{l}_{s}+\left(\ln\sqrt{g}\right)_{t}
\]
The LHS of equation (\ref{Hamcalc}) becomes $\frac{\partial\hat{l}_{s}}{\partial t}+\left(\ln\sqrt{g}\right)_{t}$
and the RHS is unchanged.

The underlying dynamical system time tendency generalizes to
\begin{eqnarray}
C_{i} & = & F_{i}+Q_{ij}U^{j}+P_{ijp}U^{j}U^{p}\label{AA}\\
F_{i} & \equiv & B_{i}^{kl}\mu_{k}\mu_{l}+H_{i}^{k}\mu_{k}+X_{i}\\
Q_{ij} & \equiv & \mu_{l}g_{kj}\left(B_{i}^{kl}+B_{i}^{lk}\right)+H_{i}^{k}g_{kj}\\
P_{ijp} & \equiv & B_{i}^{kl}g_{kj}g_{lp}
\end{eqnarray}

Note now that $F_{i}$ is the dynamical system tendency term for mean
deviations from the equilibrium values.

The remaining term in $R$ may be now calculated in terms of score
variables:
\begin{align*}
C_{i}\frac{\partial}{\partial x_{i}}\hat{l}_{s} & =-\left(U^{k}Z_{k}^{i}+Y^{i}\right)C_{i}\\
Z_{k}^{i} & \equiv\left(\delta_{k}^{i}-r^{il}g_{lk}\right)\\
Y^{i} & \equiv r^{il}\left(\mu_{l}-s_{l}\right)
\end{align*}

and so
\begin{align*}
R & =\dot{g}_{ij}V^{ij}+\left(\dot{\mu}_{i}-N_{i}\right)U^{i}-M_{ij}U^{j}U^{i}-T_{ijp}U^{j}U^{p}U^{i}-\Omega\\
\Omega & \equiv Y^{k}F_{k}\\
N_{i} & \equiv Z_{i}^{k}F_{k}+Y^{k}Q_{ki}\\
M_{ij} & \equiv Z_{i}^{k}Q_{kj}+Y^{k}P_{kji}\\
T_{ijp} & \equiv Z_{i}^{n}P_{njp}
\end{align*}

Using (\ref{relationship}) this can be written more compactly as
\begin{align*}
R & =\left(\dot{g}_{ij}-2M_{ij}\right)V^{ij}+\left(\dot{\mu}_{i}-N_{i}\right)U^{i}-T_{ijp}U^{j}U^{p}U^{i}-\Psi\\
\Psi & \equiv\Omega+g^{ij}M_{ij}
\end{align*}

The Lagrangian can now be computed using the expectation values for
the score variables:
\begin{eqnarray*}
2\mathscr{L}=\left\langle R^{2}\right\rangle  & = & \left(\dot{\mu}_{i}-N_{i}\right)g^{ij}\left(\dot{\mu}_{j}-N_{j}\right)+\frac{1}{4}\left(\dot{g}_{ij}-2M_{ij}\right)\left(g^{ik}g^{jl}+g^{il}g^{jk}\right)\left(\dot{g}_{kl}-2M_{kl}\right)\\
 &  & -2S^{i}\left(\dot{\mu}_{i}-F_{i}\right)+\Phi\\
S^{i} & \equiv & T_{jkl}\left(g^{ik}g^{jl}+g^{il}g^{jk}+g^{ij}g^{kl}\right)=B_{j}^{rs}g^{ij}g_{sr}-2r^{nl}g_{lj}B_{n}^{ij}-r^{ni}g_{sr}B_{n}^{rs}\\
\Phi & \equiv & \Psi^{2}+T_{ijk}T_{lmn}\sum_{Binary\;Partitions}g^{ij}g^{kl}g^{mn}
\end{eqnarray*}

\bibliographystyle{plain}
\bibliography{/home/richard/Dropbox/refmerge}

\end{document}